\newcommand{\qw}[1][-1]{\ar @{-} [0,#1]}
\newcommand{\measureD}[1]{*{\xy*+=+<.5em>{\vphantom{\rule{0em}{.1em}#1}}*\cir{r_l};p\save*!R{#1} \restore\save+UC;+UC-<.5em,0em>*!R{\hphantom{#1}}+L **\dir{-} \restore\save+DC;+DC-<.5em,0em>*!R{\hphantom{#1}}+L **\dir{-} \restore\POS+UC-<.5em,0em>*!R{\hphantom{#1}}+L;+DC-<.5em,0em>*!R{\hphantom{#1}}+L **\dir{-} \endxy} \qw}
\newcommand{\Qcircuit}[1][0em]{\xymatrix @*=<#1>}
\newcommand{\pureghost}[1]{*+<1em,.9em>{\hphantom{#1}}}
\newcommand{\multiprepareC}[2]{*+<1em,.9em>{\hphantom{#2}}\save[0,0].[#1,0];p\save !C
  *{#2},p+RU+<0em,0em>;+LU+<+.8em,0em> **\dir{-}\restore\save +RD;+RU **\dir{-}\restore\save
  +RD;+LD+<.8em,0em> **\dir{-} \restore\save +LD+<0em,.8em>;+LU-<0em,.8em> **\dir{-} \restore \POS
  !UL*!UL{\cir<.9em>{u_r}};!DL*!DL{\cir<.9em>{l_u}}\restore}
\newcommand{\prepareC}[1]{*{\xy*+=+<.5em>{\vphantom{#1\rule{0em}{.1em}}}*\cir{l^r};p\save*!L{#1} \restore\save+UC;+UC+<.5em,0em>*!L{\hphantom{#1}}+R **\dir{-} \restore\save+DC;+DC+<.5em,0em>*!L{\hphantom{#1}}+R **\dir{-} \restore\POS+UC+<.5em,0em>*!L{\hphantom{#1}}+R;+DC+<.5em,0em>*!L{\hphantom{#1}}+R **\dir{-} \endxy}}
\newtheorem{Def}{Definition}
\newtheorem{corollary}{Corollary}
\newtheorem*{corollary*}{Corollary}
\newtheorem{theorem}{Theorem}
\newcommand{\Ket}[1]{ \left| #1 \right\rangle}
\newcommand{\Bra}[1]{ \left\langle #1 \right|}
\newcommand{\KetBra}[2]{\Ket{#1}\!\!\Bra{#2}}
\newcommand\State[1]{\left|#1\right)}
\newcommand\Effect[1]{\left(#1\right|}
\newcommand{\RBraKet}[2]{(#1|#2)}
\newcommandx\ProbCond[4][usedefault, addprefix=\global, 1=,
\newcommand\ProbCondTilde[2]{\widetilde{\operatorname{Pr}}\!\left[#1\vphantom{#2} \right|
\left.#2\vphantom{#1}\right]}
\newcommandx\Prob[2][usedefault, addprefix=\global, 1=]{\operatorname{Pr}_{#1}\!\left[#2\right]}
\newcommand{\ustickcool}[1]{*!D!<0em,-.1em>=<0em>{\scriptstyle #1}}
\begin{document}
% \title{A theory is spooky if and only if it is steering}
\title{Spooky action at a distance in general probabilistic theories}

\author[quit,infn]{Giacomo Mauro D'Ariano\fnref{url}}
\ead{dariano@unipv.it}
\author[quit]{Franco Manessi\fnref{url}}
\ead{franco.manessi01@ateneopv.it}
\author[quit,infn]{Paolo Perinotti\fnref{url}} 
\ead{paolo.perinotti@unipv.it}
\address[quit]{QUIT group,  Dipartimento di Fisica, via Bassi   6, 27100 Pavia, Italy.} 
\address[infn]{INFN Gruppo IV, Sezione di Pavia, via Bassi, 6, 27100 Pavia, Italy.} 
\date{\today}
\fntext[url]{URL: \url{http://www.quantummechanics.it}}

\begin{abstract}
  We call a probabilistic theory ``complete'' if it cannot be further
  refined by no-signaling hidden-variable models, and name a theory
  ``spooky'' if every equivalent hidden-variable model violates
  Shimony's Outcome Independence. We prove that a complete theory is
  spooky if and only if it admits a pure steering state in the sense
  of Schr\"odinger. Finally we show that steering of complementary
  states leads to a Schr\"odinger's cat-like paradox.
\end{abstract}
\begin{keyword}
quantum theory \sep hidden variable theories \sep Schr\"odinger's cat\sep steering
\end{keyword}

\maketitle
\bibliographystyle{elsarticle-num}
\section{Introduction}

Since the early days physicists have been wondering whether Quantum Theory (QT) can be considered
complete \cite{vonNeumann1932, EPR}, or more refined theories compatible with quantum predictions
could exist. These models,
also known as Hidden Variable Theories (HVT), reproduce QT thanks to a
statistical definition of pure quantum states, which are obtained as averages over the more fundamental
states of the HVT. In this approach, which reduces QT to a Statistical Mechanics, many results have
been obtained, such as the theorems by Kochen-Specker and Bell \cite{Bell_1964,
  KSTheorem}, and the results by Conway-Kochen on the free will \cite{FreeWill, StrongFreeWill}.

Recently, General Probabilistic Theories (GPT)  have received great attention as the appropriate framework to study foundational aspects
of physics \cite{QUIT-Arxiv, Hardy:2001jk,
  QUIT-ProbTheories, PhilosophyQuantumInformationEntanglement2, 2009arXiv0911.0695D, Masanes,
  Physics.4.55}. Despite much work has been devoted to the relations between probabilistic theories and
HVTs, these results are mostly a characterization of the probability measures, lacking a conceptual
physical characterization of the theory itself, for example in terms of axioms. So far there exist
examples of probability measures that do not respect locality, signaling, non-contextuality,
determinism, completeness, etc., but none of these highlights the physical properties that a GPT must
fulfill in order to achieve such violations.

The present Letter breaks the ground in the direction of providing a characterization theorem for
complete ``spooky'' theories (see definitions in the following). Roughly speaking, the spookiness of a
complete theory is the apparent ``action at a distance'' due to outcome correlations
\cite{born1971born}.
We show that spookiness for complete theories is equivalent to Schr\"odinger's steering property
\cite{schrodinger1936probability,  hughston1993complete}. We do not discuss the completeness
assumption since an exhaustive inquiry would require a much 
more complicate analysis, comparable to a generalized Bell theorem for GPTs. Finally, we use the
results about spookiness to prove that complementarity and steering are necessary and sufficient
conditions to raise a Schr\"odinger's cat-like paradox.

\section{Hidden variable theories for a GPT}

The most important feature of a given probabilistic theory---such as
QT or more generally any GPT---is the probability rule that links the
various elements of the theory itself. More precisely, given a state
$\rho$, % (a quantum or an operational state)
a group of observers for the theory ($ A, B, C, \dots $) and the
measurements $ a,b,c,\dots $ that $ A,B,C,\dots $ perform, the
probability rule $ \ProbCond{ a_i, b_j, c_k, \dots }{ a, b, c, \dots
  \rho} $ is defined for every possible outcome $ a_i, b_j, c_k, \dots
$ over a suitable sample space $\Omega$.  In the remainder of the
Letter, we will drop the explicit dependence of all probablity rules
on the state $\rho$. We can now define a hidden variable description
for the previous model as follows.
\begin{Def}[Hidden Variable Theory]
  An equivalent HVT for a GPT is given by a set $\Lambda\ni\lambda$, and a probability rule
  $\ProbCondTilde{\cdot}{\cdot}$ on $ \Omega \times \Lambda$, such that \cite{Branderburger08}
\begin{align}\label{eq:HVT}
  & \ProbCond{ a_i, b_j, c_k, \dots}{ a,b,c,\dots }=\\
  & \sum_{\lambda} \ProbCondTilde{a_i, b_j, c_k, \dots}{
    a,b,c,\dots, \lambda}\ProbCondTilde{\lambda}{ a,b,c, \dots
  }. \nonumber
\end{align}
for every state of the GPT.
\end{Def}
In the following we will restrict our attention to HVTs satisfying two
requirements: {\em $\lambda$-independence}, namely
$\ProbCondTilde{\lambda}{ a,b,c,
  \dots}=\widetilde{\mathrm{Pr}}[\lambda]$, i.e.\ $\lambda$ is an {\em
  objective} parameter independent of the choice of
measurements \footnote{Notice that a realistic theory where $\lambda $ is
  correlated with the observers' choices could in principle be considered,
  however such a theory would be necessarily {\em ad hoc}, and even
  more puzzling than its original GPT \cite {bell}.}; and {\em
  parameter independence}, namely $\ProbCondTilde{a_i}{ a,b,c, \dots,
  \lambda}=\ProbCondTilde{a_i}{a,\lambda}$ and similarly for $b$, $c$,
$\dots$, i.e.\ the HVT is {\em no-signaling}. Clearly, given a GPT,
without these two restrictions we can always build an equivalent
deterministic HVT which is signaling, and denies observers' free
choice \cite{Branderburger08}.

A GPT is {\em complete} if every equivalent HVT provides no further descriptive detail.  Besides
classical probability theory, there is at least a GPT that is complete in the present sense, which
is indeed Quantum Theory, as proved recently by Colbeck and Renner in Ref. \cite{Colbeck2011}.  

It is now crucial to require that probabilities depend non-trivially on the hidden variable.
\begin{Def}[Descriptively significant HVT]
  A HVT is descriptively significant for an equivalent GPT if it satisfies $\lambda$-independence
  and parameter independence, and there exists a pure state and measurements $a,b,\dots, a_i,
  b_j,\dots$ such that for some $\lambda,\lambda^\prime\in\Lambda$ with $\ProbCondTilde{a_i, b_j,
    \dots}{a,b,\dots, \lambda} \ne 0$, one has
  \begin{equation}\label{eq:nonsignificative}
\ProbCondTilde{a_i, b_j, \dots}{a,b,\dots, \lambda} 
\neq\ProbCondTilde{a_i, b_j, \dots}{a,b,\dots, \lambda^\prime}.\!\!\!
  \end{equation}
\end{Def}

\begin{Def}[Complete GPT]
A GPT is complete if every equivalent HVT is not descriptively significant.
\end{Def}

The reason why it is important to investigate only descriptively significant HVTs is the
following. Given a non significant HVT for a given GPT, for all pure states and all $a,b,\dots, a_i,
  b_j,\dots$, we have that, by Eq.\eqref{eq:nonsignificative} and Eq.\eqref{eq:HVT}
\begin{equation}
  \ProbCond{a_i,b_j,\dots}{a,b,\dots} = \ProbCondTilde{a_i,b_j,\dots}{a,b,\dots,\lambda_i},
\end{equation}
for all \( \lambda_i\in\Lambda \) such that $\ProbCondTilde{a_i, b_j,
\dots}{a,b,\dots, \lambda_i} \ne 0$. Therefore, we conclude that \( \ProbCondTilde{a_i,b_j,\dots}{a,b,\dots,\lambda_i} \) shares all the features of \( \ProbCond{a_i,b_j,\dots}{a,b,\dots} \), e.g.\ non locality or complementarity.

Given a GPT, among all HVTs equivalent to it and not descriptively significant, there is one theory
that enjoys the so-called ``\emph{single-valuedness property}'' \cite{Branderburger08}.

\begin{Def}[Single-valuedness]\label{sval}
  A HVT satisfies the single-valuedness property if $ \left| \Lambda \right| = 1 $.
\end{Def}
For a HVT with single-valuedness there exists only one hidden variable value
%\todo{value?} 
$\lambda_0$, whence for every $i$ and $j$, $ \ProbCond{a_i, b_j}{a, b}
\equiv \ProbCondTilde{a_i, b_j}{a, b, \lambda_0}$. Given a GPT there
is always an equivalent hidden variable model which satisfies
single-valuedness \cite{Branderburger08}: this fact recalls the
intuition that QT can be regarded itself as a HVT, where the hidden
variable role is played by the quantum state. If we want to study a
complete probabilistic theory it is useful to refer to the simplest
non descriptively significant equivalent hidden variable model, that
is the one which satisfies single-valuedness.

Thanks to J.~P.~Jarrett \cite{Jarrett84}, it is known that the Bell
locality \cite{Bell_1964} is equivalent to the conjunction of two
different properties: the aforementioned Parameter Independence
and the Shimony's so-called Outcome Independence \cite{shimony1984controllable}.
Parameter independence corresponds to the property of  ``\emph{no-signaling without
  exchange of physical systems}'' in \cite{QUIT-ProbTheories} for
GPTs, while Outcome Independence can be stated as the factorizability
of joint probabilities, i.e.\ \footnote{The usual definition of {\em Outcome
  independence} in the literature is the following. A probabilistic HVT
  satisfies the outcome independence property if and only if $ \forall
  a,b,c,\dots,a_i,b_j,c_k,\dots, \lambda $ on \begin{eqnarray*}
  &&\operatorname{Pr}[{a_i}|{a,b,c,\dots,b_j,c_k,\dots, \lambda}]
  =\operatorname{Pr}[{a_i}|{a,b,c,\dots, \lambda}],\\
  &&\operatorname{Pr}[{b_j}|{a,b,c,\dots,b_j,c_k,\dots, \lambda}] =
  \operatorname{Pr}[{b_j}|{a,b,c,\dots, \lambda}],\\
  &&\operatorname{Pr}[{c_k}|{a,b,c,\dots,b_j,c_k,\dots, \lambda}] =
  \operatorname{Pr}[{c_k}|{a,b,c,\dots, \lambda}], \end{eqnarray*} and so on.
  One can easily prove that this definition is equivalent to Eq.
  (\ref{eq:OI-factorisability}).}
\begin{equation}\label{eq:OI-factorisability}
  \ProbCondTilde{a_i, b_j}{a, b, \lambda} = \ProbCondTilde{a_i}{a, b,
\lambda} \times \ProbCondTilde{b_j}{a, b, \lambda}.
\end{equation}
% namely as the factorizability of joint probabilities. 
Notice that the previous definition can be applied to a general GPT,
regarded as a single-valued HVT.

The EPR paradox can be rewritten in the following similar way
\cite{EPRnorsen, Branderburger08}: quantum predictions are not
compatible with any equivalent non descriptively significant HVT which
satisfies Outcome Independence. For this reason, according to EPR,
QT presents a {\em spooky action at a distance}. We
now want to extend the EPR result, namely: which are the GPTs that
present this spooky flavor? First we must define in what sense a
theory can present spooky features.
\begin{Def}[Spooky theory]
  A GPT is spooky if it violates outcome independence on a pure state and every equivalent descriptively
  significant HVT does so.
\end{Def}
From now on, we will focus on complete spooky GPTs, unless told
otherwise.

\section{Review of general probabilistic theories}

Before starting we need to introduce the usual notation for GPTs.  For
a detailed discussion see \cite{QUIT-Arxiv}. The symbols $\rho_A$,
$\State{\rho}_A $ and $ \Qcircuit @C=.5em @R=.5em { \prepareC{\rho} &
\ustickcool{A} \qw & \qw } $ denote the \emph{state} $\rho$ for system
$A$, representing the information about the system initialization,
including the probability that such preparation can occur. The set of
the states of a given system $A$ is a (truncated) positive cone, and
therefore given the states $\{ \rho_{i} \}_{i\in\eta}$ for $A$, every their
convex combination belongs to the cone of the states of $A$. The extremal rays of the
cone---namely the states which cannot  be seen as a convex combination
of other ones---are the so called {\em pure} states.

Similarly, $c_{iA}$, $\Effect{ c_i }_A $ and $ \Qcircuit @C=.5em
@R=.5em { & \ustickcool{A} \qw & \measureD{ c_i } } $ mean the
\emph{effect} $c_i$ for system $A$ or, in more practical terms, the
$i$-th outcome of the test (measurement) $c=\{c_i\}_{i\in\eta}$ on
system $A$. Given a system $A$, its effects are bounded linear
positive functionals from the states of $A$ to $\left[ 0,1
\right]\subset\mathbb{R}$, and therefore they belong to the dual cone
of the cone of the states. The application of the effect $c_i$ on the
state $\rho$ is written as $\RBraKet{ c_i }{ \rho }_A$ or $\Qcircuit
@C=.5em @R=.5em { \prepareC{\rho} & \ustickcool{A} \qw & \measureD{
c_i } } $, and it means the probability that the outcome of measure
$c$ performed on the state $\rho$ of system $A$ is $c_i$, i.e.\ $
\RBraKet{ c_i }{ \rho }_A :=\ProbCond{c_i}{c}$. In the following we
will not specify the system when it is clear from the context or it is
generic.

The symbol $e_A$ will denote a \emph{deterministic effect} for
system $A$, namely a measurement with a single outcome.  For any state
$\sigma$, the symbol $\RBraKet{ e }{ \sigma }$ denotes its preparation
probability within a test including a measurement $\{c_i\}_{i\in\eta}$
such that $e=\sum_{i\in\eta} c_i$. A state $\sigma$ is deterministic
if we know with certainty that it has been prepared in any test,
whence $ \RBraKet{ e }{ \sigma }= 1 $ for every deterministic effect
$e$. An {\em ensemble} is a collection of (possibly non-deterministic)
states $\{\alpha_i\}_{i\in\eta}$ such that
$\rho:=\sum_{i\in\eta}\alpha_i$ is deterministic.  A GPT is causal
(i.e.\ it satisfies the {\em no-signaling from the future} axiom
\cite{QUIT-Arxiv}) iff the deterministic effect is unique. Thanks to
this last feature, in a causal GPT the preparation probability for the
state $\sigma$ is well defined since it is independent of the tests
following the preparation. For this reason, for every state $\rho$ we
can always consider the deterministic state $\bar\rho :=
\RBraKet{e}{\rho}\rho$, or, in other words, in a causal theory
evey state is proportional to a deterministic one. In the following,
we will consider a general {\em causal} GPT.

\section{Spookiness, steering and completeness}

In this section we will show our main results. Let $\rho$ be a joint
deterministic state for systems $A$ and $B$. Let $a_0$, $a_1$ be two
effects for $A$ forming a so-called \emph{complete test}: namely, for
every state $ \sigma$ of $A$ we have $ \RBraKet{ a_0 }{ \sigma}+
\RBraKet{a_1}{ \sigma }= \RBraKet{e}{ \sigma} $. Similarly, let the
effects $b_0$, $b_1$ form a complete test for $B$. Let us define the
following useful shorthand
\begin{equation}\label{eq:abbreviations}
 p_{ij} := \ProbCond{a_i, b_j}{a, b} \equiv 	\begin{aligned}
												  \Qcircuit @C=.5em @R=.5em { \multiprepareC{1}{ \rho }& \ustickcool{A} \qw & \measureD{a_i} \\
																			   \pureghost{ \rho } &  \ustickcool{B} \qw &  \measureD{b_j} }
												\end{aligned}.
\end{equation}
The number $ p_{ ij } $ represents the probability that Alice and Bob
obtain respectively the $i$-th and the $j$-th outcome while performing
measurements $a$ and $b$ on the state $\rho$. 
%For binary tests $\ProbCond{a_i}{a, b}=p_{ij}+p_{i(j\oplus1)}$, with $\oplus$ denoting the sum
%modulo 2 and analogously for $\ProbCond{b_j}{a, b}$. 

Under these assumptions the following theorems hold.

\begin{theorem}\label{thm:NotOI=Table}
  A complete GPT is spooky if and only if there exists a pure state $\rho$ of $AB$ and tests $\{a_0,a_1\}$ of $A$ and
  $\{b_0, b_1\}$ of $B$ such that the probabilities $p_{ij} $ of Eq.~\eqref{eq:abbreviations}
  satisfy the following constraint: 
  \begin{equation}\label{eq:spooky}
    p_{00} \ p_{11}\ne p_{01}p_{10} .
  \end{equation}
\end{theorem}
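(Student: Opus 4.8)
The plan is to prove both directions of the biconditional by exploiting the relationship between a complete GPT (viewed through its single-valued HVT) and Outcome Independence as expressed via factorizability in Eq.~\eqref{eq:OI-factorisability}. Since the theory is complete, every equivalent descriptively significant HVT assigns to a pure state $\rho$ the same probabilities as the single-valued model, so for the pure state $\rho$ we may identify $\ProbCondTilde{a_i,b_j}{a,b,\lambda_0}$ with $p_{ij}=\RBraKet{a_i\otimes b_j}{\rho}$. Thus spookiness---the violation of Outcome Independence on a pure state by \emph{every} equivalent descriptively significant HVT---reduces, for the single-valued representative, to the statement that the joint probabilities $\{p_{ij}\}$ fail to factorize as a product of an $A$-marginal and a $B$-marginal.

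The key algebraic step is to translate ``non-factorizability'' into the determinant condition \eqref{eq:spooky}. First I would compute the marginals using the complete-test property: since $a_0+a_1=e$ and $b_0+b_1=e$, and $\rho$ is deterministic so that $\RBraKet{e\otimes e}{\rho}=1$, the numbers $p_{ij}$ form a genuine $2\times 2$ probability table with $\sum_{ij}p_{ij}=1$, marginals $p_{i\cdot}=p_{i0}+p_{i1}$ and $p_{\cdot j}=p_{0j}+p_{1j}$. Outcome Independence in the form \eqref{eq:OI-factorisability} asserts $p_{ij}=p_{i\cdot}\,p_{\cdot j}$ for all $i,j$. A short calculation shows that for a normalized $2\times 2$ table this product form holds for all four entries if and only if the single determinant condition $p_{00}p_{11}=p_{01}p_{10}$ holds; this is the standard equivalence between independence and vanishing of the determinant (equivalently, zero covariance) for a two-outcome bipartite distribution. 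Hence factorizability fails precisely when $p_{00}p_{11}\neq p_{01}p_{10}$, which is exactly \eqref{eq:spooky}.

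For the forward direction, I would argue that if the complete GPT is spooky, then by definition Outcome Independence is violated on some pure state $\rho$ with some tests; taking these tests and completing them to $\{a_0,a_1\}$ and $\{b_0,b_1\}$ if necessary, the violation of factorizability together with the $2\times 2$ equivalence above yields \eqref{eq:spooky}. For the converse, if \eqref{eq:spooky} holds for some pure $\rho$ and complete tests, then the table $\{p_{ij}\}$ does not factorize, so the single-valued HVT---and hence, by completeness, every equivalent descriptively significant HVT, all of which reproduce the same $p_{ij}$ on the pure state---violates Outcome Independence, making the theory spooky.

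The main obstacle I anticipate is the careful handling of the completeness hypothesis to guarantee that the conclusion holds for \emph{every} equivalent descriptively significant HVT rather than merely for the single-valued one. The delicate point is the implication, established in the discussion following the definition of descriptively significant HVTs, that in a complete theory any descriptively significant HVT must reproduce $\ProbCond{a_i,b_j}{a,b}=\ProbCondTilde{a_i,b_j}{a,b,\lambda_i}$ on pure states, so that non-factorizability of $\{p_{ij}\}$ is inherited by every such HVT. Making this inheritance rigorous---and confirming that it runs in both directions so that the determinant condition is genuinely equivalent to spookiness and not merely sufficient---is where the argument needs the most care; the $2\times 2$ linear algebra itself is routine.
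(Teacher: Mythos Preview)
Your approach is essentially the paper's: reduce spookiness on a pure state to non-factorizability of the $2\times 2$ table $\{p_{ij}\}$, then observe that for a normalized $2\times 2$ probability table non-factorizability is equivalent to a nonvanishing determinant, i.e.\ Eq.~\eqref{eq:spooky}. The paper phrases the second step as ``the matrix $p_{ij}$ has rank larger than one,'' which is the same content.

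The one place where you over-complicate matters is the role of completeness. You worry about showing that \emph{every} equivalent descriptively significant HVT inherits the non-factorizability, and you even call this the main obstacle. But by the definition of a complete GPT there are \emph{no} descriptively significant equivalent HVTs at all, so the clause ``every equivalent descriptively significant HVT violates Outcome Independence'' in the definition of spookiness is satisfied vacuously. Hence for a complete GPT spookiness collapses immediately to the single condition that the GPT itself (viewed as the single-valued HVT) violates Outcome Independence on some pure state; no inheritance argument is needed, and your sentence ``every equivalent descriptively significant HVT, all of which reproduce the same $p_{ij}$'' conflates significant with non-significant HVTs. This is why the paper's proof can be so brief.
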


\begin{proof}
  A complete GPT is spooky iff it has a test $\left\{ a_0,a_1 \right\}$ for system $A$, a
  test $\{b_0,b_1\}$ for system $B$, and pure state $\rho$ for $AB$ such that $p_{ij}$ is not
  factorized. This is equivalent to the requirement that the matrix $p_{ij}$ has rank larger than
  one, namely for a $2\times 2$ matrix the determinant of the matrix is non vanishing, i.e. Eq.
  (\ref{eq:spooky}) holds.
\end{proof}

\begin{figure}
  \includegraphics[width=.85\columnwidth]{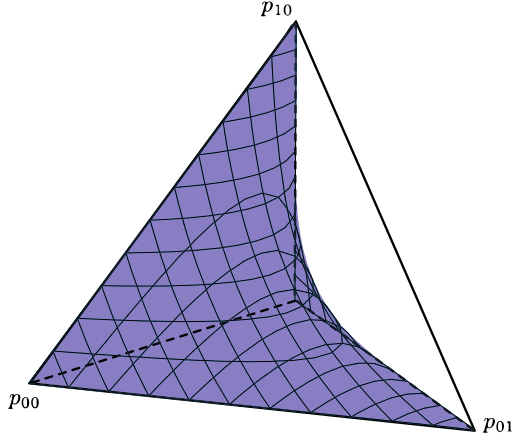}
  \caption{\label{fig:tetrahedron_factorisability_surface} The tetrahedron (outlined in thick black
    line) represents the set of possible values of probabilities $p_{00}$, $p_{01}$, $p_{10}$
    satisfying the normalization condition. The hyperbolic paraboloid identifies the probabilities
    that satisfy Outcome Independence.} 
\end{figure}

The theorem states that if we describe by a pure state of a GPT an experiment with probabilities
given by Eq.\eqref{eq:spooky}, then we can not provide a local explanation for the observed physical
phenomenon.

Notice that the previous result does not manifestly require any choice by the observers, since it needs only
one test for each subsystem.  Consequently, there are no explicit assumptions about the observers' free will,
thus extending Brandenburger--Yanofsky's reformulation of the
EPR paradox \cite{Branderburger08}. One may wonder how such non-locality can be proved without 
observers' choice between ``complementary'' measurements. The answer resides in the completeness
requirement. Clearly a GPT with only one measurement for each observer always admits an equivalent HVT,
which is the deterministic one, since there is no requirement for parameter independence and
$\lambda$-independence. 

Generally, requiring $ 0 \le p_{ij}\le 1 $ and $ \sum p_{ij} = 1 $ implies
that $ p_{00} $, $ p_{01} $, $ p_{10} $ must lie in the tetrahedron outlined in figure
\ref{fig:tetrahedron_factorisability_surface} ($p_{11}$ is simply obtained by the normalization
condition). The spookiness condition, namely Eq.~\eqref{eq:spooky}, defines a hyperbolic paraboloid, and
all spooky theories give rise to points of the tetrahedron that do not lie on the surface of the
paraboloid.

We now prove a theorem that along with Theorem \ref{thm:NotOI=Table}
provides the main result in this Letter. The following theorem
pertains to the property of steering \cite{schrodinger1936probability,
  hughston1993complete} for a GPT, that we briefly recall here.
\begin{Def}[Steering state for an ensemble]
  The state \( \rho \) of the system $AB$ steers the ensemble \( \{
  p_i\alpha_i \}_{i\in\eta} \) of states of the system $A$ if there
  exists a test \( \{ b_i \}_{i\in\eta} \) of the system $B$ such that
\begin{equation}
  \begin{aligned}
    \Qcircuit @C=.5em @R=.5em { \multiprepareC{1}{ \rho }& \ustickcool{A} \qw & \qw \\
							    \pureghost{ \rho } & \ustickcool{B} \qw &  \measureD{b_i} }
  \end{aligned} \equiv \ p_i\ \Qcircuit @C=.5em @R=.5em { \prepareC{\alpha_i} & \ustickcool{A} \qw & \qw } \qquad \left( \forall i\in\eta \right).
\end{equation}
\end{Def}

\begin{theorem}\label{thm:Table=Steering}
  A GPT admits a steering state $\rho$ for a non trivial ensemble of
  two different states if and only if the probabilities of $\rho$ satisfy
  Eq.~\eqref{eq:spooky} for some local test $a,b$.
\end{theorem}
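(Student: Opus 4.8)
The plan is to prove both implications by identifying the columns of the probability matrix $p_{ij}$ with the conditional states into which a test on $B$ steers $A$. The key observation is that, since $\rho$ is a deterministic state of $AB$ and $\{b_0,b_1\}$ is a complete test of $B$, applying $b_j$ to the $B$-port of $\rho$ produces a (generally non-deterministic) conditional state $\tilde\alpha_j$ of $A$, and these satisfy $p_{ij}=\RBraKet{a_i}{\tilde\alpha_j}$. Writing $\tilde\alpha_j=q_j\alpha_j$ with $q_j:=\RBraKet{e}{\tilde\alpha_j}$ the preparation probability and $\alpha_j$ the normalized state, the pair $\{\tilde\alpha_0,\tilde\alpha_1\}$ is precisely the ensemble steered by $\{b_0,b_1\}$, because $\tilde\alpha_0+\tilde\alpha_1$ is the deterministic marginal of $\rho$ on $A$.

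For the direction ``steering $\Rightarrow$ Eq.~\eqref{eq:spooky}'', I would start from a test $\{b_0,b_1\}$ steering a non-trivial ensemble of two distinct states, so that $q_0,q_1\neq0$ and $\alpha_0\neq\alpha_1$. Since $\alpha_0$ and $\alpha_1$ are different states and effects separate states in a GPT, there is an effect $a_0$ with $\RBraKet{a_0}{\alpha_0}\neq\RBraKet{a_0}{\alpha_1}$; completing it with $a_1:=e-a_0$ gives a complete test of $A$. Factoring $q_0$ and $q_1$ out of the two columns of $p$ and using $\RBraKet{a_0}{\alpha_j}+\RBraKet{a_1}{\alpha_j}=1$, a one-line computation yields $\det p=q_0q_1\,(\RBraKet{a_0}{\alpha_0}-\RBraKet{a_0}{\alpha_1})\neq0$, which is exactly the spookiness condition Eq.~\eqref{eq:spooky}.

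For the converse I would assume complete tests $\{a_0,a_1\}$, $\{b_0,b_1\}$ with $\det p\neq0$ and simply take the steering test to be $\{b_0,b_1\}$ itself, with the steered ensemble $\{\tilde\alpha_0,\tilde\alpha_1\}$ defined as above; the steering relation then holds by construction. What remains is to verify non-triviality. If some $q_j$ vanished, the corresponding column of $p$ would be null and $\det p=0$; and if $\alpha_0=\alpha_1$ the two columns would be proportional, again forcing $\det p=0$. Hence $\det p\neq0$ forces both $q_j\neq0$ and $\alpha_0\neq\alpha_1$, so the steered ensemble is non-trivial and consists of two different states.

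I expect the only delicate point to be the forward direction, where one must guarantee that two distinct steered states can actually be told apart by a \emph{genuine} complete test of $A$: this rests on the separating property of effects (states of a GPT are determined by the values they assign to effects) together with the fact that in a causal theory any effect $a_0$ is completed by $e-a_0$, which is itself a legitimate effect. Everything else reduces to the elementary fact that, for a $2\times2$ matrix, a vanishing determinant is equivalent to proportionality of its columns, which here translates into equality of the two steered states.
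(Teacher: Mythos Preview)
Your proposal is correct and follows essentially the same route as the paper's own proof: both arguments identify $p_{ij}=\RBraKet{a_i}{\tilde\alpha_j}$ with $\tilde\alpha_j$ the conditional state after $b_j$, normalize to $\tilde\alpha_j=q_j\alpha_j$, and reduce the spookiness condition to $\det p=q_0q_1\bigl(\RBraKet{a_0}{\alpha_0}-\RBraKet{a_0}{\alpha_1}\bigr)\neq 0$. The only stylistic difference is that the paper introduces an explicit offset $w:=\RBraKet{a_0}{\alpha_0}-\RBraKet{a_0}{\alpha_1}$ and expands the products by hand, whereas you phrase the same computation as a determinant and use column proportionality; the logical content is identical.
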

\begin{proof} Let us prove the two-way implication in two steps.
% First we are going to prove the ``only if'' statement.
  \paragraph{$\left( \Rightarrow \right) $} The steering assumption
  implies the existence of a state $\rho$ for the composite system $
  AB $ which steers the marginal ensemble $\{p_0 \alpha_0,
  p_1\alpha_1\}$, with $ 0 < p_0, p_1 < 1$, $ p_0 + p_1 = 1 $ and
  $\alpha_i$ deterministic states for $A$, such that $\RBraKet{ a_0
  }{\alpha_0 } \ne \RBraKet{ a_0 }{ \alpha_1 } $ for some effect
  $a_0$.
% Namely there exists a test $ \{b_0, b_1\}$ such that
% \begin{equation}
%   \begin{aligned}
%     \Qcircuit @C=.5em @R=.5em { \multiprepareC{1}{ \rho }& \ustickcool{A} \qw & \qw \\
% 							    \pureghost{ \rho } & \ustickcool{B} \qw &  \measureD{b_i} }
%   \end{aligned} \equiv p_i \times \Qcircuit @C=.5em @R=.5em { \prepareC{\alpha_i} & \ustickcool{A} \qw & \qw } \qquad \left( i=0,1 \right).
% \end{equation}
% Since $ \RBraKet{ a_0
% }{ \alpha_0 } \ne \RBraKet{ a_0 }{ \alpha_1 } $, there 
This last inequality implies that there
exists $ 0
< \left| w \right| \le 1 $ such that $ \RBraKet{ a_0 }{ \alpha_0 } =
\RBraKet{ a_0 }{ \alpha_1 } + w $ (or equivalently $ \RBraKet{ a_1
}{ \alpha_0 } = \RBraKet{ a_1 }{ \alpha_1 } - w $). Therefore,
using the substitutions of Eq.~\eqref{eq:abbreviations},
the RHS of Eq.~\eqref{eq:spooky} $ p_{01}p_{10}= p_0 p_1
\RBraKet{ a_0 }{ \alpha_1 } \RBraKet{ a_1 }{ \alpha_0 }$ can be
rewritten as
\begin{align}
  &p_0 p_1 \{\RBraKet{ a_0 }{ \alpha_0 }\RBraKet{ a_1 }{ \alpha_1 }-w[\RBraKet{ a_0 }{ \alpha_0 }+\RBraKet{ a_1 }{ \alpha_1 }] +w^2\}\nonumber\\
  &=p_0 p_1 \{\RBraKet{ a_0 }{ \alpha_0 }\RBraKet{ a_1 }{ \alpha_1 }-w\},
\end{align}
where we used $ \RBraKet{ a_1 }{ \alpha_1 } =\RBraKet{ a_1 }{ \alpha_0 } + w$ and the
normalization $\RBraKet{ a_0 }{ \alpha_0 }+\RBraKet{ a_1 }{ \alpha_0 }=1$.
Since $ w \ne 0 $ and $ 0 < p_0, p_1 < 1$, we conclude that the RHS of
Eq.~\eqref{eq:spooky} is not equal to $ p_0 p_1 \RBraKet{ a_0 }{ \alpha_0 }\RBraKet{ a_1}{ \alpha_1 }
= p_{00}p_{11} $, thus proving Eq.~\eqref{eq:spooky}.

\paragraph{$\left( \Leftarrow \right) $} Let us introduce for
system $A$ the (non-deterministic) states $\tilde{\alpha}_0 $, $
\tilde{\alpha}_1$ defined as
\begin{equation}\label{eq:AlphaTilde}
  \begin{aligned}
    \Qcircuit @C=.5em @R=.5em { \prepareC{\tilde\alpha_i} &
      \ustickcool{A} \qw & \qw }
  \end{aligned} :=
  \begin{aligned}
    \Qcircuit @C=.5em @R=.5em { \multiprepareC{1}{\rho} 	& \ustickcool{A} \qw & \qw \\
      \pureghost{\rho} & \ustickcool{B} \qw & \measureD{b_i} }
  \end{aligned} \qquad \left( i = 0,1 \right).
\end{equation}
Thanks to Eqs.~(\ref{eq:abbreviations},\ref{eq:AlphaTilde}),
Eq.~\eqref{eq:spooky} can be rewritten as follows
\begin{align}
  \RBraKet{ a_0 }{ \tilde\alpha_0 }\RBraKet{ a_1 }{ \tilde\alpha_1} = \RBraKet{ a_0 }{ \tilde\alpha_1 }\RBraKet{ a_1 }{ \tilde\alpha_0 } + w,
\label{eq:intra}
\end{align}
where $ 0 < \left| w \right| \le 1 $. It is useful to define the deterministic states $ \alpha_0$, $
\alpha_1$ for system $A$ such that
\begin{equation}
  \label{eq:Alpha} \tilde\alpha_i = \RBraKet{e}{\tilde\alpha_i} \alpha_i \qquad \left( i = 0,1 \right).
\end{equation}
Since $ \{a_0, a_1\} $ is a complete test, $
\RBraKet{ a_0 }{\tilde \alpha_i } + \RBraKet{ a_1 }{\tilde \alpha_i
} =\RBraKet{e}{\tilde\alpha_i}$, for $i=0,1$, and from
Eq.~\eqref{eq:intra} we conclude that $ \RBraKet{ e }{ \tilde\alpha_0
} $ and $ \RBraKet{ e }{ \tilde\alpha_1 } $ cannot be zero
(otherwise $w$ would be zero, against the hypothesis).  Thus
Eq.~\eqref{eq:intra} can be divided by $ \RBraKet{ e }{ \tilde\alpha_0
} \RBraKet{ e }{ \tilde\alpha_1 }$, obtaining
\begin{equation}
  \RBraKet{ a_0 }{\alpha_0 }\RBraKet{ a_1 }{ \alpha_1} = \RBraKet{ a_0 }{\alpha_1 }\RBraKet{ a_1 }{ \alpha_0 } +\frac{w}{\RBraKet{ e }{ \tilde\alpha_0 } \RBraKet{ e }{ \tilde\alpha_1 } }.
\end{equation}
Using $\RBraKet{ a_1 }{\alpha_i }=1-\RBraKet{ a_0 }{\alpha_i}$, one has
\begin{equation}
  \RBraKet{ a_0 }{ \alpha_0 } =  \RBraKet{ a_0 }{ \alpha_1 } + \frac{w}{\RBraKet{ e }{ \tilde\alpha_0 } \RBraKet{ e }{ \tilde\alpha_1 } }.
\end{equation}
The last term of the RHS of the previous equation is not equal to
zero, since $ w \ne 0 $,  thus we conclude that $ \RBraKet{ a_0 }{
  \alpha_0 } \ne \RBraKet{ a_0 }{ \alpha_1 } $.
Since $ \RBraKet{e}{\tilde\alpha_0}, \RBraKet{e}{\tilde\alpha_1} \ne 0, 1 $ the ensemble
$\{\RBraKet{e}{\tilde\alpha_0}\alpha_0,\RBraKet{e}{\tilde\alpha_1} \alpha_1\}$
is not trivial. Finally, according to \eqref{eq:AlphaTilde}, and remembering that $b_0$, $b_1$
constitute a complete test for system $B$, it can be easily seen that 
the state $\rho$ for the composite system $AB$ steers the non trivial ensemble
$\{\RBraKet{e}{\tilde\alpha_0} \alpha_0,\RBraKet{e}{\tilde\alpha_1} \alpha_1 \}$
thanks to effects $b_0$, $b_1 $.
\end{proof}
As a natural consequence of Theorems \ref{thm:NotOI=Table} and \ref{thm:Table=Steering}, we have the following corollary.
\begin{corollary}
For a complete GPT the conditions: i) spookiness, ii) existence of a pure steering state for a
non-trivial ensemble, and iii) existence of a pure state satisfying Eq.~\eqref{eq:spooky}, are all
equivalent.
\end{corollary}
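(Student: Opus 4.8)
The plan is to derive the three-way equivalence by simply chaining the two theorems just proved, with each supplying one of the two biconditionals that together close the loop. First I would observe that Theorem~\ref{thm:NotOI=Table} already states (i)$\,\Leftrightarrow\,$(iii) verbatim: for a complete GPT, spookiness holds if and only if there exist a pure state $\rho$ of $AB$ and complete tests $\{a_0,a_1\}$, $\{b_0,b_1\}$ whose joint probabilities $p_{ij}$ satisfy $p_{00}p_{11}\neq p_{01}p_{10}$, i.e.\ Eq.~\eqref{eq:spooky}. Hence no further argument is needed for this half.

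Next I would establish (ii)$\,\Leftrightarrow\,$(iii) from Theorem~\ref{thm:Table=Steering}. The essential point is that the biconditional in that theorem holds \emph{state by state}: a given $\rho$ steers a non-trivial ensemble of two distinct states if and only if the probabilities of that same $\rho$ obey Eq.~\eqref{eq:spooky} for some local test. Specialising to pure $\rho$, a pure state steers such an ensemble exactly when it satisfies the spooky constraint; applying the existential quantifier over pure states to both sides then yields (ii)$\,\Leftrightarrow\,$(iii).

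Combining (i)$\,\Leftrightarrow\,$(iii) with (ii)$\,\Leftrightarrow\,$(iii) gives the full equivalence of the three conditions, completing the proof. The only delicate point is that Theorem~\ref{thm:Table=Steering} is phrased for arbitrary states, whereas the corollary concerns pure ones; since that theorem's equivalence is valid for each individual $\rho$, restricting the quantifier to pure states preserves it, so the same pure state witnesses both (ii) and (iii). I expect this transport of the purity qualifier through the two theorems---ensuring one common pure witness---to be the only, and rather minor, obstacle.
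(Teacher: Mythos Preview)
Your proposal is correct and matches the paper's approach: the corollary is stated there as a ``natural consequence'' of Theorems~\ref{thm:NotOI=Table} and~\ref{thm:Table=Steering}, with no further argument. Your observation that Theorem~\ref{thm:Table=Steering} holds state by state, so the purity qualifier transfers through, is exactly the small check needed and is left implicit in the paper.
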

Notice that completeness is always assumed in our arguments (apart from Theorem
\ref{thm:Table=Steering}). Indeed, since the hyperbolic paraboloid of Fig.
\ref{fig:tetrahedron_factorisability_surface} includes the four vertices of the tetrahedron, the
probabilities of a single couple of tests $a=\{a_0,a_1\}$ for $A$ and $b=\{b_0,b_1\}$ for $B$ can
always be thought of as a mixture of factorized probabilities, and consequently there could be in
principle a descriptively significant HVT compatible with any such model. In order to state stronger
no-go theorems---like Bell's inequality---one must consider incompatible tests, and the assumption
of free will becomes crucial.

\section{Complementarity and Schr\"odinger's cat}

From the results in the previous section the following corollary can be
easily proved.
\begin{corollary}
  A complete GPT with a pure steering state for a mixture of two
  states $\alpha_0,\alpha_1$ with $\alpha_1$ conclusively
  discriminable from $\alpha_0$ is spooky.
\end{corollary}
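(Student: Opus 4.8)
The plan is to reduce the statement of this corollary to the equivalence already established by the preceding Corollary, so that it suffices to exhibit a pure state satisfying the spookiness inequality \eqref{eq:spooky}. By that Corollary the three conditions---spookiness, existence of a pure steering state for a non-trivial ensemble, and existence of a pure state obeying \eqref{eq:spooky}---coincide for a complete GPT. Thus the real content to verify is that the hypotheses here (a \emph{pure} steering state for the mixture $\{p_0\alpha_0, p_1\alpha_1\}$ together with \emph{conclusive discriminability} of $\alpha_1$ from $\alpha_0$) force the mixture to be a genuine non-trivial ensemble of two \emph{different} states, which is precisely the input required to invoke Theorem \ref{thm:Table=Steering} in the $(\Rightarrow)$ direction and then Theorem \ref{thm:NotOI=Table}.

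First I would make precise what conclusive discriminability buys us. Saying that $\alpha_1$ is conclusively discriminable from $\alpha_0$ means there is an effect $a_1$ that occurs with certainty on $\alpha_1$ but never on $\alpha_0$, i.e.\ $\RBraKet{a_1}{\alpha_1}=1$ while $\RBraKet{a_1}{\alpha_0}=0$ (for deterministic $\alpha_0,\alpha_1$). Setting $a_0:=e-a_1$ completes $\{a_0,a_1\}$ to a complete test on $A$ with $\RBraKet{a_0}{\alpha_0}=1\ne 0=\RBraKet{a_0}{\alpha_1}$. This immediately produces the strict inequality $\RBraKet{a_0}{\alpha_0}\ne\RBraKet{a_0}{\alpha_1}$ that drives the $(\Rightarrow)$ computation in the proof of Theorem \ref{thm:Table=Steering}: the quantity $w:=\RBraKet{a_0}{\alpha_0}-\RBraKet{a_0}{\alpha_1}=1$ is nonzero (indeed it attains its extreme value $|w|=1$), so $\alpha_0\ne\alpha_1$ and the ensemble is genuinely two-valued.

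The main line of argument then runs as follows. Because $\rho$ is assumed pure and steers the non-trivial ensemble $\{p_0\alpha_0,p_1\alpha_1\}$ with $0<p_0,p_1<1$ and $\alpha_0\ne\alpha_1$, Theorem \ref{thm:Table=Steering} applies in the $(\Rightarrow)$ direction and yields local tests $\{a_0,a_1\}$, $\{b_0,b_1\}$ whose probabilities $p_{ij}$ on $\rho$ satisfy \eqref{eq:spooky}. Since $\rho$ is pure and the GPT is complete, condition (iii) of the preceding Corollary is met, so by the equivalence there the GPT is spooky; this is the desired conclusion. I would emphasize that the discriminability hypothesis is exactly the device that certifies the two steered states are distinct, which is the nondegeneracy condition that Theorem \ref{thm:Table=Steering} needs.

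The step I expect to require the most care is checking that conclusive discriminability of the two deterministic states $\alpha_0,\alpha_1$ really delivers $\RBraKet{a_0}{\alpha_0}\ne\RBraKet{a_0}{\alpha_1}$ and not merely $\alpha_0\neq\alpha_1$ in some weaker sense: one must confirm that the discriminating effect extends to a legitimate complete test $\{a_0,a_1\}$ in the given causal GPT, with $a_0,a_1$ bona fide positive effects summing to the (unique) deterministic effect $e$. Once that is secured, the remainder is a direct appeal to the two theorems and their corollary, with no further computation needed; the purity of $\rho$ is carried through unchanged so that condition (iii) is about a \emph{pure} state, matching the completeness-based equivalence.
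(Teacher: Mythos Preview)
Your approach is the same as the paper's: conclusive discriminability is used only to certify $\alpha_0\neq\alpha_1$, after which Theorem~\ref{thm:Table=Steering} (direction $\Rightarrow$) and Theorem~\ref{thm:NotOI=Table} (equivalently, the preceding Corollary) give spookiness. One correction: you misquote the definition of conclusive discriminability as requiring $\RBraKet{a_1}{\alpha_1}=1$; the paper only demands $0<\RBraKet{a_1}{\alpha_1}\le 1$ (the equality case is \emph{perfect} discriminability), so your claim $w=1$ is generally false---but $w=\RBraKet{a_0}{\alpha_0}-\RBraKet{a_0}{\alpha_1}=1-(1-\RBraKet{a_1}{\alpha_1})=\RBraKet{a_1}{\alpha_1}>0$ still holds, and the rest of the argument is unaffected.
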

 
Before proving the last corollary, we precisely define when two
states are probabilistically discriminable.

\begin{Def}[Conclusively discriminable states]
  The state $ \alpha_1 $ is conclusively discriminable from the state
  $ \alpha_0 $ if there exists an effect $ a $ such that $
  \RBraKet{a}{\alpha_0} = 0 $ and $ 0 < \RBraKet{a}{\alpha_1} \le 1 $.
  If $ \RBraKet{a}{\alpha_1} = 1 $ we say that $\alpha_0 $, $\alpha_1
  $ are perfectly discriminable.
\end{Def}

\begin{proof}[Proof of the corollary]
  Two states $\alpha_0,\alpha_1$ with $\alpha_1$ conclusively
  discriminable from $\alpha_0$ provide a particular case of different
  states. Therefore we can apply Theorem \ref{thm:Table=Steering} and
  conclude that the probabilities for the GPT must reside in the
  tetrahedron and not on the hyperbolic paraboloid. Hence, according
  to Theorem \ref{thm:NotOI=Table}, the complete GPT is spooky.
\end{proof}
We will now show that complementarity--along with steering--implies all variants of the
Schr\"odinger-cat paradox. The notion of complementarity has been the main focus of Bohr's
philosophy of QT, however, it has been often criticized for the lack of a precise mathematical
formulation. A definition of complementarity is provided in the framework of quantum logic (see
e.~g. \cite{Lahti:1980p10323} and references therein), however, it has never been defined as a
general notion outside QT.  This is due to the fact that complementarity regards contexts that may
seem unrelated, as wave-particle duality and non-commutativity.  Uncertainty and its quantitative
relation with non-locality was analyzed in Ref.  \cite{Oppenheim19112010}. Here we propose a notion
of complementarity that summarizes all the aspects that emerge within QT, and allows for a precise
mathematical formulation within the broader context of GPTs.  In order to do that, let us define
what is a proposition for a GPT.

\begin{Def}[Proposition for a GPT]
  Given a GPT, let $ a:=\{a_0,a_1\}$ be a complete binary test. The test $a$ is a {\em proposition}
  if there exist two states $\alpha_0$ and $\alpha_1$ such that $\RBraKet{ a_i }{ \alpha_j }=
  \delta_{ij} $.
\end{Def}

We will call a state $\rho$ {\em sharp for a set of propositions $\{a^{(i)}\}$} if the 
probabilities for all effects of such propositions are either zero or one. We can now precisely
formulate complementarity. 

\begin{Def}[GPT with Complementarity]\label{def:GPT-complementarity}
  A GPT entails complementarity if there are two propositions $a^{(0)}$ and $a^{(1)}$ having no
  common sharp state.  These propositions will be called {\em complementary}.
\end{Def}
One may think that a more general definition of complementarity involves a number $N\geq 2$ of
propositions $\{a^{(i)} \}_{i=0}^N$ having no common sharp state. However, this is just
equivalent to Definition \ref{def:GPT-complementarity}, namely complementarity is an intrinsically
dual notion \footnote{Indeed, by hypothesis there exist $N+1$ tests $\{a^{(i)}\}_{i=0}^N$ such that 
\(
\RBraKet{a^{(i)}_j}{\rho} = 1 \implies \exists k,l\quad 0 < \RBraKet{a^{(k)}_l}{\rho} < 1
\).
Let us define the number $k$ as the maximum number of the propositions \( a^{(i)} \) for which there exists a state $\rho$ such that each proposition is deterministic, i.e.
\begin{align*}
  &\Phi:=\{\phi\subseteq\{0,\dots,N\}|\exists\rho,\ \forall j\in\phi\ \ \exists l:\ \RBraKet{a^{(j)}_l}{\rho}=1\},\\
  &k :=\max_\Phi |\phi|.
\end{align*}
By hypothesis, the number $k$ is strictly less than \( N+1 \). Let us take a set $\phi\in\Phi$ for which $|\phi|=k$, and let us define the effects \( a_j := \frac{1}{k}\sum_{i\in\phi} a^{(i)}_j\) and \( \tilde a_j := a^{(l)}_j \) with an arbitrary \( l\notin\phi \). By construction, both tests \( a:= \{ a_j \} \) and \( \tilde a := \{ \tilde a_j \} \) are propositions and have no common sharp state. The converse is trivial.}.

By the above definition the complementary propositions cannot jointly have a definite truth value.
What is the paradox of the famous Schr\"odinger's cat argument \cite{schrodinger1935present}? In its
popularized version the paradox lies in the fact that the cat pure state is a superposition
$\Ket{\Psi^\pm}:=2^{-\frac{1}{2}}(\Ket{\text{dead} }\pm\Ket{\text{alive}})$ before the measurement of its state of
life, thus coming from complementarity \emph{per se} (the test
$\{\KetBra{\Psi^+}{\Psi^+},\KetBra{\Psi^-}{\Psi^-}\}$ is complementary to $\{\KetBra{ \text{dead}
}{\text{dead}}, \KetBra{ \text{alive} }{\text{alive}}\} $.  However, the original paradox is subtler
and relies on the ability to remotely prepare orthogonal states for the cat. Let us imagine for
example that the state of life of the cat is entangled with the spin of an electron as in the state
$2^{-\frac{1}{2}}(\Ket{ \uparrow } \otimes \Ket{ \text{alive} } +\Ket{ \downarrow } \otimes \Ket{
  \text{dead} }) $. After the measurement of the spin of the electron along the $z$ direction we
have prepared the cat in the states $ \Ket{ \text{dead}} $ or $ \Ket{\text{alive}} $ each with
probability $1/2$. The proposition corresponding to the life state of the cat has a truth value that
is conditioned by the outcome of a measurement on the electron. This situation by itself would not
be puzzling if the state were a mixture, as in the Bell's argument of {\em Bertlmann's socks}.  The
paradox is the fact that in a {\em pure state} a definite property--the cat is alive--is neither
true nor false. This version of the paradox stems from pure state steering of an ensemble of
perfectly discriminable states.

We now provide a third version of the paradox, which relies on the existence of a pure state that
steers an ensemble of sharp states for complementary propositions. This is the case e.g. of the
state $2^{-\frac{1}{2}}(\Ket{ \uparrow } \otimes \Ket{ \text{alive} } +\Ket{ \downarrow }
\otimes\Ket{\Psi^+}) $. After the measurement of the spin of the electron along the $z$ direction we
have prepared the cat in the states $ \Ket{ \text{alive}} $ or
$2^{-\frac{1}{2}}(\Ket{\text{dead}}+\Ket{\text{alive}}) $ each with probability $1/2$.  In this case
the outcome of the measurement does not simply decide the truth value of a proposition, but it even
establishes which proposition has a definite truth value.  It is worth noticing that, according to
Definition \ref{def:GPT-complementarity}, the tests $\{ \KetBra{ \text{alive} }{ \text{alive}
},\KetBra{ \text{dead} }{ \text{dead}}\}$ and $\{\KetBra{ \Psi^+ }{ \Psi^+ } ,\KetBra{ \Psi^- }{
  \Psi^- }\}$ are complementary.  Complementarity and steering are thus the ingredients for the
third version of the paradox: given a GPT, suppose that for a system $A$ (the cat) there are two
complementary propositions $ a $, $ \tilde a $. By hypothesis there are two sets of states $ \{
\alpha_i \} $, $ \{ \tilde\alpha_i \} $ such that $ \RBraKet{ a_i }{ \alpha_j } = \RBraKet{ \tilde
  a_i }{ \tilde\alpha_j } = \delta_{ij} $, $ 0 < \RBraKet{ a_i }{ \tilde\alpha_j }, \RBraKet{ \tilde
  a_i }{ \alpha_j }< 1 $. If the theory has a pure steering state $ \rho_{AB} $ for the ensemble
$\{p_0 \alpha_0, p_1 \tilde \alpha_0 \} $ of the system $A$, thanks to our corollary we conclude
that the GPT is spooky, since $ \tilde \alpha_0 $ (``alive'') is conclusively discriminable from $
\alpha_0$ ($\Psi^+$) by test $ a$. We notice that the first version of the paradox involves only
complementary, the second one involves only pure-state steering, whereas the third one uses both.
If the theory is complete, the existence of a pure steering state for a perfectly
discriminable or complementary ensemble implies spookiness, which is thus necessary for the second
and third version of the paradox.

\section{Conclusion}

We have shown that for a complete GPT spookiness and pure state steering of a non-trivial ensemble
are equivalent. Moreover, we thoroughly introduced the notion of complementarity for GPTs, and used
it in order to discuss three different versions the Schr\"odinger cat paradox. A crucial ingredient
for all our results is completeness, namely the property of a GPT consisting in the impossibility of
having descriptively significant HVTs. Classical probability theory is complete, and the same has
been recently proved also for QT \cite{Colbeck2011}. In our knowledge QT is the only theory
satisfying the hypotheses of our theorems.  Nevertheless, our result is relevant, due to its
generality, and because it highlights the interplay between two main features of the
theory--spookiness and existence of a pure steering state--without recurring to the mathematical
structure of Hilbert spaces, only relying on the conceptual formalism of GPTs.  The question whether
the theorem applies to a wider class of theories opens a decisive new problem, namely determining
what GPTs are complete, and--if other than Classical and Quantum--what are the common features they
enjoy.

\section*{Acknowledgments}
  We acknowledge useful comments by H. Wiseman. This work is supported by Italian
  Ministry of Education through grant PRIN 2008. P. P. acknowledges financial support by the EU
  through FP7 STREP project COQUIT.

\bibliography{spooky}
\end{document}